\newtheorem{lem}{Lemma} 
\newtheorem{prop}{Proposition}
\newcommand{\Y}[1]{{(Y#1)}}
\title{Polynomial Approximations of Hysteresis Curves Near the Demagnetized State}
\author{S. E. Langvagen\thanks{Electronic address: sergey.langwagen@gmail.com}}
\affil{\small Chernogolovka, Moscow Region}
\date{November 21, 2017}
\begin{document}

\maketitle

\begin{abstract}
  Polynomial approximations of hysteresis curves were studied for
  systems exhibiting the return point memory. An extended Rayleigh law
  that uses polynomials of the third degree, and Rayleigh-like
  equations describing the energy dependence on the applied magnetic
  field are proposed. The results were compared with numerical
  experiments on a zero temperature random bond Ising model.
\end{abstract}

\section{Introduction}

Symmetric hysteresis loop and the virgin magnetization curve in
the neighborhood of the demagnetized state are described by the
equations
\begin{equation}\label{Rayleigh}
M = (a +bH_m)H \pm \frac{b}{2}(H^2 - H_m^2),\qquad M = aH \pm bH^2,
\end{equation}
where the upper and lower signs distinguish the ascending and
descending branches. Equations (\ref{Rayleigh}) represent the
so-called Rayleigh law \cite{Bozorth1951, Chikazumi1997, Bertotti1998,
  Cullity&Graham2009}, named after Lord Rayleigh, who discovered them
experimentally \cite{Rayleigh1887}. Rayleigh equations have been
confirmed for many ferromagnetic materials.  Neel gave the first
explanation of the Rayleigh law in terms of domain walls moving in a
random energy landscape \cite{Neel1942, Neel1943}. For recent
development in the microscopic foundation of the Rayleigh law see,
e.g., \cite{Zapperi&all2002} and references therein.

This work does not concern details of the underlying mechanism
responsible for the Rayleigh law.  Instead, restrictions on hysteresis
curves imposed by the return point memory, also called ``wiping out''
property, \cite{Bertotti1998, Mayergoyz2003, Sethna&all1993} are
studied. The consideration is based on the results of the previous
work \cite{Langvagen2017} that are summarized below for convenience of
the reader.

\bigskip

Let the slowly varying uniaxial magnetic field $H(t)$ is applied to a
demagnetized ferromagnetic specimen. Let $H$ decreases by $\Delta H_0$
starting from the value $H = 0$, then increases by $\Delta H_1$, then
decreases by $\Delta H_2$ and so on till $\Delta H_n$, as shown in
Fig. \ref{fig:HM_curves}.  The final macroscopic state of the specimen
is completely determined by the sequence $\Delta H_0, \ldots, \Delta
H_n$, where $n$ can be any number, $n = 0,1,\,\ldots$\;.

If the specimen exhibits the return point memory (RPM), all the states
that can be obtained by applying $H(t)$, can be reached by the process
such that
\begin{equation}\label{DH-ineq}
  2\Delta H_0> \Delta H_1 > \ldots \Delta H_n > 0.
\end{equation}
These values are considered as coordinates in the so called ``minimal
space of states'', which includes all and only the states
reachable from the demagnetized state by applying $H(t)$. 


It is convenient to designate
\begin{equation}\label{xi-def}
  \xi_0 = 2\Delta H_0,\, \xi_1 = \Delta H_1,\, \xi_2 = 
  \Delta H_2,\, \ldots,\, \xi_n = \Delta H_n
\end{equation}
and assume that
\begin{equation}\label{xi-ineq}
  \xi_0 \geq \xi_1 \geq \xi_2 \geq \ldots \geq \xi_n \geq 0.
\end{equation}
\begin{figure}[H]
  \begin{center}
    \includegraphics[scale=0.57]{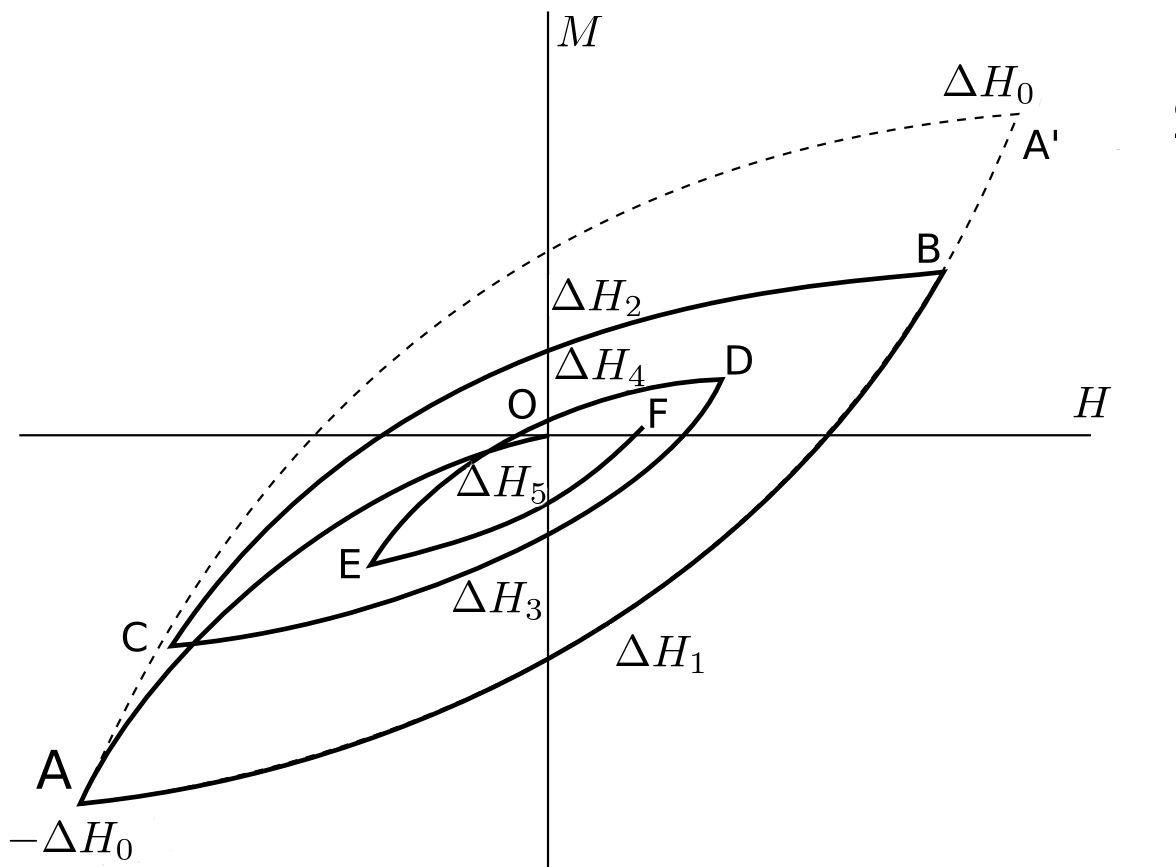}
    \caption{Magnetization process $OABCDEF$ starts from the
      demagnetized state at the point $O$ and is performed by
      decreasing the field $H$ by the value $\Delta H_0$, then
      increasing it by $\Delta H_1$, then decreasing it by $\Delta
      H_2$ and so on. The final state of the system is determined by
      the values $\Delta H_0, \ldots,\Delta H_n$, where $n+1$ is the
      number of hysteresis branches. The symmetric cycle $AA'$ is
      shown with dashed line.}\label{fig:HM-curves}
  \end{center}
\end{figure}\label{fig:HM_curves}
\vspace{-3mm}

The behavior of any related to the specimen macroscopic physical value
$y$ that depends on the magnetic state, can be expressed as a sequence
of functions
\begin{equation}\label{y(xi)-read-out}
  \{y_n(\xi_0,\ldots,\xi_n)\},\quad n = 0,1,\,\dots\,.
\end{equation}
Functions (\ref{y(xi)-read-out}) must satisfy the following
conditions:
\begin{equation}\nonumber
  \mbox{\Y0}\;\;\; y_n(\xi_0,\xi_1,\ldots,\xi_n) = 
  y_{n-1}(\xi_0,\xi_1,\ldots,\xi_{n-1})$,\;\mbox{ if }$\xi_n = 0,\,n\geq 1.
\end{equation}
\begin{flalign*}
  \mbox{\Y1}\;\;\;y_n(\xi_0,\ldots,\xi_k,\xi_{k+1},\ldots,\xi_n) = 
  y_{n-2}(\xi_0,\ldots,\xi_{k-1},\xi_{k+2},\ldots,\xi_n),\\
  \mbox{ if }\xi_k = \xi_{k+1},\,\, 1\leq k\leq n - 1,\, n\geq 2.
\end{flalign*}
\begin{equation}\nonumber
  \mbox{\Y2}\;\;\; 2\frac{\partial y_n}{\partial \xi_0} + 
  \frac{\partial y_n}{\partial \xi_1} = 0,\; 
  \mbox{if}\; \xi_0 = \xi_1,\, n = 1, 2, \ldots\,.
\end{equation}

\medskip

The first condition \Y0 seems to be obvious, and the second \Y1
directly follows from the RPM. Condition \Y2 gives the possibility to
obtain the demagnetized state by applying alternating magnetic field
of gradually decreasing amplitude. According to \cite{Langvagen2017},
it guarantees that the physical value described by the sequence of
functions $\{y_n(\xi_0,\ldots ,\xi_n)\}$ becomes equal to its value in
the initial demagnetized state $y_0(0)$ after the
demagnetization. Note that the sequences of functions that follow \Y0
-- \Y2 form a linear space.

If the state $(\xi_0,\xi_1,\xi_2,\,\ldots\,,\xi_n)$ is obtained from
the demagnetized state with the input $H(t)$ the state
$(\xi_0,\xi_0,\xi_1,\, \ldots\, ,\xi_n)$ can be obtained with the
input $-H(t)$.  For the magnetic hysteresis, we are usually interested
in {\em symmetric} or {\em antisymmetric} functions satisfying one of
the following conditions:
\begin{align}\nonumber
&\mbox{\Y{s}}\quad y_n(\xi_0, \xi_1, \xi_2, \ldots , \xi_n) = 
y_{n-1}(\xi_0,\xi_2,\ldots,\xi_n),\;\mbox{if}\;\xi_1 = \xi_0,\; 
 n = 1,2,\ldots\,, \nonumber\\[1em]
& \mbox{\Y{a}}\quad y_n(\xi_0,\xi_1,\xi_2\ldots,\xi_n) = 
-y_{n-1}(\xi_0,\xi_2,\ldots,\xi_n),\;\mbox{if}\;\xi_1 = \xi_0,\;
n = 1,2,\ldots\,.\nonumber
\end{align}
The antisymmetric functions can describe magnetization $M$, and the
symmetric ones can describe, for example, the energy of the specimen.

\section{Taylor expansion}

Below we assume that $y_n(\xi_0,\ldots,\xi_n)$ 
have continuous partial derivatives of sufficient order at any point
in the region determined by inequalities (\ref{xi-ineq}).

\medskip

If the sequence $\{y_n(\xi_0,\ldots,\xi_n)\}$ satisfies conditions \Y0
-- \Y2, and optionally \Y{s} or \Y{a}, these conditions also hold true
for the sequence $\{y_n(\lambda\xi_0,\ldots,\lambda\xi_n)\}$ with any
constant $\lambda > 0$. It is also not difficult to see that \Y0 --
\Y2, and \Y{s} or \Y{a} hold true for the sequence
\begin{equation}\label{y(lambda)-diff}
  \big\{\frac{d^k}{d\lambda^k}y_n(\lambda\xi_0,
  \ldots\,,\lambda\xi_n)\big\},
\quad
n = 0,1,2,\ldots\,,
\end{equation}
with fixed $k = 0, 1, 2, \ldots\,$ and $\lambda > 0$.

\medskip

According to Taylor's theorem
\begin{equation}\label{Taylor}
  y_n(\xi_0,\ldots ,\xi_n) = 
  \sum_{k=0}^r P_n^{(k)}(\xi_0,\ldots ,\xi_n) + r_n(\xi_0,\,\ldots\,\xi_n),
\end{equation}
where
\begin{equation}\nonumber
  P^{(k)}_n(\xi_0,\ldots,\xi_n) = \frac{1}{k!}
  \frac{d^k y_n(\lambda\xi_0,\ldots,\lambda\xi_n)}{d\lambda^k}\Big|_{\lambda = 0}\,,
\quad
r_n(\xi_0,\,\ldots\,\xi_n) = o(\xi_0^r).
\end{equation}
Here $P^{(k)}_n$ are homogeneous polynomials of degree $k$. The
estimate of the reminder $r_n$ is written taking into account
inequalities (\ref{xi-ineq}). The estimate is uniform with respect to
$n$ if derivatives (\ref{y(lambda)-diff}) of order $r+1$ are uniformly
bounded with respect to $n$.  As follows from (\ref{y(lambda)-diff})
when $\lambda$ tends to zero, the sequences of polynomials
$\{P^{(k)}_n(\xi_0,\ldots,\xi_n)\}$ must satisfy conditions \Y0 -- \Y2
and optionally \Y{s} or \Y{a}, for any $k = 0,1,2, \ldots\,$.

\medskip

If conditions \Y0 -- \Y2 are applicable, expansion (\ref{Taylor}) gives
polynomial approximation of corresponding hysteresis curves near the
demagnetized state.  Similar consideration can be preformed in the
neighborhood of any state with fixed coordinates $\xi_0,\ldots\xi_m$
by expanding functions $y_n(\xi_0,\ldots \xi_m,
\lambda\xi_{m+1}\ldots\lambda\xi_n)$, $0<\lambda\leq 1$. In this case
polynomials $P^{(k)}_{m+1,n}(\xi_{m+1}\ldots,\xi_n)$ must satisfy
conditions \Y0, \Y1 only, and likely have different coefficients at
different points $(\xi_0,\ldots ,\xi_m)$.

\medskip

In the following study, the consideration is restricted to the
neighborhood of the demagnetized state and to the polynomials of the
third degree and lower. Note that the terms {\em functions} and {\em
  sequence of functions}, {\em polynomials} and {\em sequence of
  polynomials} are used interchangeably.

\section{Elementary Homogeneous Polynomials}

Homogeneous polynomials up to the third degree that follow
conditions \Y0 -- \Y2 are listed in Table~\ref{tab:poly3-elem}, where
\begin{equation} \label{poly3-elem}
\sigma^{(p)}_n = \sum_{i=0}^{n} \epsilon_i \xi^{p}_i\,,\quad
\sigma^{(12)}_n = \frac{1}{12}\,\xi_0^3 + 
\frac{1}{2} \sum_{1\leq i \leq n} \xi^3_i + \!\!\!\!\!
\sum_{0 \leq i<j \leq n} \epsilon_i \epsilon_j \xi_i \xi_j^2\,,\quad
\sigma^{(21)}_n = \frac{1}{6}\,\xi_0^3 + 
\frac{1}{2} \sum_{1\leq i \leq n} \xi^3_i + \!\!\!\!\!
\sum_{0 \leq i<j \leq n} \epsilon_i \epsilon_j \xi_i^2 \xi_j\,.
\end{equation}
Here $p = 1,2,3$, and 
\begin{equation}\label{epsilons}
\epsilon_0=-1/2 ,\;\epsilon_i=(-1)^{i+1} 
\;\mbox{for}\; i=1,2,\,\ldots\,. 
\end{equation}

Taking into account (\ref{xi-def}), it can be seen that
$\sigma^{(1)}_n = H$. Polynomials $\sigma^{(1)}_n$,
$\big(\sigma_n^{(1)}\big)^2$, $\big(\sigma_n^{(1)}\big)^3$ are
unhysteretic.  Polynomials $\sigma^{(2)}_n$, $\sigma^{(3)}_n$
represent the ordinary and $\sigma^{(12)}_n$, $\sigma^{(21)}_n$ the
butterfly-shaped hysteresis curves, and the relation
$\sigma^{(1)}_n\sigma^{(2)}_n = \sigma^{(12)}_n + \sigma^{(21)}_n$
holds true. 

Conditions \Y0 -- \Y2 can be easily verified for all the polynomials
in Table~\ref{tab:poly3-elem}. We call these polynomials elementary
because, as follows from Proposition \ref{prop:poly3-elem}, they form
a basis in the linear space of the sequences of third-degree
polynomials satisfying conditions \Y0 -- \Y2.

\begin{table}[H]
\caption{Elementary Polynomials up to Degree 3} \label{tab:poly3-elem}
\setcellgapes{4pt}
\makegapedcells
\begin{center}
\begin{tabular}{|c|c|c|}
\hline
{\bf Degree} & {\bf Symmetric} & {\bf Antisymmetric}\\
\hline
$0$ & $1$ & $-$ \\
$1$ & $-$ & $\sigma^{(1)}_n$ \\
$2$ & $\big(\sigma^{(1)}_n\big)^2$ & $\sigma^{(2)}_n$ \\
$3$ & $\sigma^{(12)}_n,\;\sigma^{(21)}_n$
& $\big(\sigma^{(1)}_n\big)^3,\;\sigma^{(3)}_n$\\
\hline
\end{tabular}
\end{center}
\end{table}

\begin{lem}\label{lem:Y1-diff}
For any $\{y_n(\xi_0,\ldots,\xi_n)\}$ satisfying conditions \Y0 -- \Y2 it holds
\begin{equation}\label{Y1-diff}
  2^{\delta_{0i_s}}\frac{\partial^r y_n}{\partial\xi_{i_1}\ldots
  \partial\xi_{i_s}\ldots\partial\xi_{i_r}} + 
  \frac{\partial^r y_n}{\partial\xi_{i_1}\ldots\partial\xi_{i_s+1}
  \ldots\partial\xi_{i_r}} = 0\,,\quad
  \mbox{if}\quad \xi_{i_s+1} = \xi_{i_s},
\end{equation}
where $i_1<\ldots<i_s,i_s+1<\ldots< i_r$, $r=1,2,\,\ldots\,$,\; 
$s=1,\,\ldots\,,r$, and  $n = 1, 2\,\ldots\,$.
\end{lem}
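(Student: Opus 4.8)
The plan is to derive (\ref{Y1-diff}) from its simplest instance $r=1$ by applying to that instance the differential operator
$D:=\partial_{\xi_{i_1}}\cdots\partial_{\xi_{i_{s-1}}}\partial_{\xi_{i_{s+1}}}\cdots\partial_{\xi_{i_r}}$,
which collects all $r-1$ derivatives other than the ``active'' one $\partial_{\xi_{i_s}}$. The decisive observation is purely combinatorial: the hypotheses $i_1<\ldots<i_s$ and $i_s+1<\ldots<i_r$ say precisely that none of the indices appearing in $D$ equals $i_s$ or $i_s+1$ (and that $i_s+1\le n$), so $D$ differentiates only along coordinate directions tangent to the face $\Pi:=\{\xi_{i_s+1}=\xi_{i_s}\}$ of the region (\ref{xi-ineq}); hence $D$ commutes with restriction of a function to $\Pi$ and therefore annihilates any function that vanishes on $\Pi$.

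First I would settle the case $r=1$. If $i_s=0$ (which forces $s=1$), then (\ref{Y1-diff}) is $2\,\partial_{\xi_0}y_n+\partial_{\xi_1}y_n=0$ on $\{\xi_0=\xi_1\}$, i.e. exactly condition \Y2. If $i_s\ge 1$, then $i_s+1\le n$ yields $1\le i_s\le n-1$ and $n\ge 2$, so \Y1 with $k=i_s$ applies: the one-variable function $t\mapsto y_n(\xi_0,\ldots,\xi_{i_s-1},t,t,\xi_{i_s+2},\ldots,\xi_n)$ is identically equal to $y_{n-2}(\xi_0,\ldots,\xi_{i_s-1},\xi_{i_s+2},\ldots,\xi_n)$ and so is independent of $t$. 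Differentiating in $t$ and using the chain rule gives $\partial_{\xi_{i_s}}y_n+\partial_{\xi_{i_s+1}}y_n=0$ on $\Pi$, which is (\ref{Y1-diff}) for $r=1$ with $2^{\delta_{0i_s}}=1$. (One small point: the differentiation in $t$ is performed at points of the relative interior of the region (\ref{xi-ineq}) and then extended to all of $\Pi$ by continuity of the relevant partials, which exist by the standing hypothesis.)

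Then, for general $r$, put $f:=2^{\delta_{0i_s}}\partial_{\xi_{i_s}}y_n+\partial_{\xi_{i_s+1}}y_n$, so that $f\equiv 0$ on $\Pi$ by the previous step. Parametrise $\Pi$ by the coordinates $(\xi_j)_{j\ne i_s+1}$, with the slot $\xi_{i_s+1}$ read off as $\xi_{i_s}$, and let $\widetilde f$ be the restriction of $f$ under this parametrisation; then $\widetilde f$ is the zero function, hence $D\widetilde f\equiv 0$. Since every variable differentiated by $D$ is one of the parametrising coordinates and is distinct from $\xi_{i_s}$ and $\xi_{i_s+1}$, each factor $\partial_{\xi_{i_j}}$ of $D$ commutes with the substitution $\xi_{i_s+1}\mapsto\xi_{i_s}$; iterating, $D\widetilde f = (Df)|_{\Pi}$. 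Finally, because the mixed partials of $y_n$ commute,
$Df=2^{\delta_{0i_s}}\frac{\partial^r y_n}{\partial\xi_{i_1}\cdots\partial\xi_{i_s}\cdots\partial\xi_{i_r}}+\frac{\partial^r y_n}{\partial\xi_{i_1}\cdots\partial\xi_{i_s+1}\cdots\partial\xi_{i_r}}$,
which is exactly the left-hand side of (\ref{Y1-diff}); thus it vanishes on $\Pi$, as required.

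I expect essentially all the content to sit in the tangency observation of the first paragraph, namely that the index inequalities in the statement are exactly those that keep $i_s$ and $i_s+1$ off the list $i_1,\ldots,i_{s-1},i_{s+1},\ldots,i_r$; once that is in hand, the rest is the elementary principle that a differential operator tangent to a hyperplane kills any function vanishing on it. The only other thing needing care is the routine boundary issue — forming and differentiating restrictions along faces of the region (\ref{xi-ineq}) — which the continuity-of-partials assumption takes care of.
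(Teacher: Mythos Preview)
Your proposal is correct and follows essentially the same approach as the paper: establish the $r=1$ identity first, then differentiate it in the remaining variables $\xi_{i_j}$ with $j\ne s$, which are independent of the constraint $\xi_{i_s+1}=\xi_{i_s}$. The only cosmetic difference is that the paper obtains the $r=1$ case by quoting the combined form of \Y1--\Y2 from \cite{Langvagen2017}, whereas you derive it directly from \Y1 (by differentiating the $t$-dependence) and \Y2; your version is more self-contained and your tangency remark makes explicit why the extra differentiations are legitimate.
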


\begin{proof}
  As shown in \cite{Langvagen2017}, conditions \Y1 and \Y2 can be
  combined in one:
\begin{equation}\label{Y1*}
  2^{\delta_{0k}}\frac{\partial y_n}{\partial \xi_k} + 
  \frac{\partial y_n}{\partial \xi_{k+1}} = 0\,,\; \mbox{if}\; 
  \xi_k = \xi_{k+1},\, n = 1,2,\ldots\,,
\end{equation}
where $\delta_{ij}$ is the Kronecker delta.  Because (\ref{Y1*}) is
true for an arbitrary $\xi_i$, $i \neq k, k+1$, it can be
differentiated by any $\xi_i$ any times giving (\ref{Y1-diff}).
\end{proof}

\begin{prop}\label{prop:poly3-elem}
  Any homogeneous polynomials $P^{(1)}_n, P^{(2)}_n, P^{(3)}_n$ of the
  degree $1,2,3$ that satisfy conditions \Y0 -- \Y2 can be represented
  as a linear combination of polynomials listed in Table
  \ref{tab:poly3-elem} as follows:
\begin{equation}\nonumber
P^{(1)}_n = \alpha_1\sigma^{(1)}_n,\qquad
P^{(2)}_n = \alpha_2\sigma^{(2)}_n + \beta_2\left(\sigma^{(1)}_n\right)^2, \qquad
P^{(3)}_n = \alpha_3\sigma^{(3)}_n + \beta_3\sigma^{(12)}_n + \gamma_3\sigma^{(21)}_n +
\delta_3\left(\sigma^{(1)}_n\right)^3,
\end{equation}
where the constants $\alpha_1,\,\ldots,\,\delta_3$ do not depend on $n$.
\end{prop}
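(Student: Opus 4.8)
The plan is to exploit \Y0 to strip off the dependence on $n$, and then to squeeze the coefficients with the combined differential condition (\ref{Y1*}). Writing $P^{(d)}_n=\sum_{|\alpha|=d}c^{(n)}_\alpha\,\xi^\alpha$, condition \Y0 says exactly that setting $\xi_n=0$ in $P^{(d)}_n$ returns $P^{(d)}_{n-1}$; hence the coefficient of any fixed monomial $\xi^\alpha$ is the same in every $P^{(d)}_n$ with $n$ at least the largest index occurring in $\alpha$, so it equals a value $c_\alpha$ independent of $n$. It therefore suffices to determine the single family $\{c_\alpha\}$ and to identify it among the coefficients of the elementary polynomials; this automatically makes the resulting constants $\alpha_1,\dots,\delta_3$ independent of $n$. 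The polynomials of Table~\ref{tab:poly3-elem} are already known to satisfy \Y0--\Y2 and are easily seen to be linearly independent within each degree (compare, e.g., the coefficients of $\xi_0\xi_1$ and $\xi_1^2$ in degree $2$, and of $\xi_0\xi_1\xi_2$, $\xi_0\xi_1^2$, $\xi_0^2\xi_1$, $\xi_1^3$ in degree $3$), so the entire content of the Proposition is that there are no further solutions.

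Degree $1$ is immediate: in $P^{(1)}_n=\sum_{i\ge0}c_i\xi_i$ the partial derivatives are the constants $c_i$, so (\ref{Y1*}) at index $k$ forces $2^{\delta_{0k}}c_k+c_{k+1}=0$, whence $c_i=c_1\epsilon_i$ and $P^{(1)}_n=\alpha_1\sigma^{(1)}_n$. For degrees $2$ and $3$ one could in principle push the same computation through: (\ref{Y1*}) at index $k$ asserts that the form $2^{\delta_{0k}}\partial_{\xi_k}P^{(d)}_n+\partial_{\xi_{k+1}}P^{(d)}_n$ is divisible by $\xi_k-\xi_{k+1}$, and equating coefficients (differentiating further via Lemma~\ref{lem:Y1-diff} where convenient) yields a finite linear system in the $c_\alpha$ whose solution space is $2$-dimensional for $d=2$ and $4$-dimensional for $d=3$; matching a general solution against Table~\ref{tab:poly3-elem} then gives $P^{(2)}_n=\alpha_2\sigma^{(2)}_n+\beta_2(\sigma^{(1)}_n)^2$ and $P^{(3)}_n=\alpha_3\sigma^{(3)}_n+\beta_3\sigma^{(12)}_n+\gamma_3\sigma^{(21)}_n+\delta_3(\sigma^{(1)}_n)^3$.

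To keep the degree-$3$ bookkeeping under control I would instead conclude by an induction on $n$. First fix the constants by the finitely many coefficients of $P^{(d)}_n$ singled out above, so that the corresponding elementary combination agrees with $P^{(d)}_n$ at the level $n=d$; that this is possible—i.e.\ that \Y0--\Y2 force a degree-$d$ form in $\xi_0,\ldots,\xi_d$ into the span of the elementary polynomials—is a finite linear-algebra check (immediate for $d=1$; for $d=2,3$ a short computation using the divisibility imposed by \Y1 and the extra relation from \Y2). Now let $D_n=P^{(d)}_n-(\text{that combination at level }n)$; the sequence $\{D_n\}$ again satisfies \Y0--\Y2, since these conditions define a linear space and the elementary polynomials satisfy them. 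By \Y0, $D_n$ vanishes at $\xi_n=0$ as soon as $D_{n-1}=0$, so $\xi_n\mid D_n$; and for $n>d$, applying \Y1 at $k=n-1,n-2,\ldots,n-d$ together with $D_{n-2}=0$ shows that $D_n$ vanishes on every hyperplane $\xi_j=\xi_{j+1}$ with $n-d\le j\le n-1$. Hence $D_n$ is divisible by the product of the $d+1$ pairwise coprime linear forms $\xi_n,\ \xi_{n-1}-\xi_n,\ \ldots,\ \xi_{n-d}-\xi_{n-d+1}$, a polynomial of degree $d+1>d$, so $D_n\equiv0$. Since $D_0=\cdots=D_d=0$ (the vanishing of $D_0,\ldots,D_{d-1}$ following from that of $D_d$ via \Y0), induction gives $D_n=0$ for all $n$, which is the claim.

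The step I expect to be the main obstacle is the base case in degree $3$: the linear-algebra verification that \Y0--\Y2 cut the degree-$3$ forms in $\xi_0,\ldots,\xi_3$ down to exactly the four-dimensional span of $\sigma^{(3)}_n,\sigma^{(12)}_n,\sigma^{(21)}_n,(\sigma^{(1)}_n)^3$, and—running through all of this—keeping track of the distinguished role of the index $0$, namely the factor $2^{\delta_{0k}}$ in (\ref{Y1*}) and the value $\epsilon_0=-1/2$, which is precisely what forces the anomalous coefficients $\tfrac1{12}$ and $\tfrac16$ of $\xi_0^3$ in $\sigma^{(12)}_n$ and $\sigma^{(21)}_n$.
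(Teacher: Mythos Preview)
Your plan is sound, and the divisibility-based induction on $n$ is a genuinely different route from the paper's argument. The paper never inducts on $n$: writing $P^{(3)}_n$ as the sum of its monomial types $A$ (pure cubes $\xi_i^3$), $B$ (terms $\xi_i\xi_j^2$), $C$ (terms $\xi_i^2\xi_j$) and $D$ (terms $\xi_i\xi_j\xi_k$), it applies the third- and then the second-order differential operators of Lemma~\ref{lem:Y1-diff} to isolate one type at a time and read off two-term recurrences on the coefficients (e.g.\ $a^{(111)}_{ijk}+a^{(111)}_{i,j,k+1}=0$ and its companions, and likewise for $a^{(12)}_{ij},a^{(21)}_{ij},a^{(3)}_i$). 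Each recurrence fixes an entire family of coefficients from a single initial one, valid for all indices at once, so after peeling off $(\sigma^{(1)}_n)^3$, then $\sigma^{(12)}_n$ and $\sigma^{(21)}_n$, then $\sigma^{(3)}_n$, nothing remains and no separate base case is needed. Your approach trades this for an elegant inductive step---$D_n$ is divisible by the $d{+}1$ pairwise non-associate linear forms $\xi_n,\xi_{n-1}-\xi_n,\ldots,\xi_{n-d}-\xi_{n-d+1}$, hence vanishes---but all the content migrates into the base case $n=d$, precisely the ``finite linear-algebra check'' you flag as the main obstacle. In effect the paper's stratified use of Lemma~\ref{lem:Y1-diff} \emph{is} a clean way to carry out that check and the induction simultaneously. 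Your inductive step would scale painlessly to higher degrees, whereas the paper's monomial-type stratification would proliferate; conversely, the paper's argument is complete as written, while yours still owes the reader the degree-$3$ base case (and must there, unlike in your inductive step, actually invoke \Y2 and track the factor $2^{\delta_{0k}}$).
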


\begin{proof}
  Consider the proof for $P^{(3)}_n$.

Any homogeneous polynomials $P_n^{(3)}$ can be expressed in the
  following form:
\begin{equation}\nonumber
P_n^{(3)}(\xi_0,\ldots\,,\xi_n)=\underbrace{\sum_{0 \leq i \leq n} a_i^{(3)}\xi_i^3}_{A}+
\underbrace{\sum_{0 \leq i<j \leq n}a_{ij}^{(12)}\xi_i\xi_j^2}_{B}
+\underbrace{\sum_{0 \leq i<j \leq n} a_{ij}^{(21)}\xi_i^2\xi_j}_{C}+
\underbrace{\sum_{0 \leq i<j<k \leq n}a_{ijk}^{(111)}\xi_i\xi_j\xi_k}_{D}.
\end{equation}
After applying to $P^{(3)}_n$ any of differential operators
\begin{equation}\nonumber
2^{\delta_{0k}}\frac{\partial^3}{\partial\xi_i\partial\xi_j\partial \xi_k} + 
  \frac{\partial^3}{\partial\xi_i\partial\xi_j\partial \xi_{k+1}}
\quad\mbox{such that}\;
i<j<k, 
\;\mbox{or}\;
i<k,k+1<j, 
\;\mbox{or}\;
k+1<i<j,
\end{equation}
the sums $A,B,C$ vanish, and in sum $D$ remain the
following:
\begin{equation}\nonumber
a^{(111)}_{ijk}+a^{(111)}_{i,j,k+1}\;\mbox{for}\;i<j<k;\quad
a^{(111)}_{ikj}+a^{(111)}_{i,k+1,j}\;\mbox{for}\;i<k,k+1<j;\quad 
a^{(111)}_{kij}+a^{(111)}_{k+1,i,j}\;\mbox{for}\;k+1<i<j. 
\end{equation}
According to Lemma~\ref{lem:Y1-diff}, it must be that
\begin{equation}\label{a111-constr}
a^{(111)}_{ijk}+a^{(111)}_{i,j,k+1}=0;\quad
a^{(111)}_{ikj}+a^{(111)}_{i,k+1,j}=0;\quad 
2^{\delta_{0k}}a^{(111)}_{kij}+a^{(111)}_{k+1,i,j}=0\,. 
\end{equation}
Starting from $a^{(111)}_{012}$ and increasing indices one by one such
that the inequalities $0\leq i<j<k\leq n$ remain true, any coefficient
$a^{(111)}_{ijk}$ in sum $D$ can be obtained.  This means that, due to
(\ref{a111-constr}), all $a^{(111)}_{ijk}$ are determined by the first
coefficient $a^{(111)}_{012}$.  On the other hand, equations
(\ref{a111-constr}) are satisfied for $a^{(111)}_{ijk} =
\epsilon_i\epsilon_j\epsilon_k$ because $2^{\delta_{0i}}\epsilon_i =
(-1)^{i+1}$ according to definition (\ref{epsilons}).  Therefore, sum
$D$ must be proportional to the sum with the coefficients
$\epsilon_i\epsilon_j\epsilon_k$,
\begin{equation}\label{sum_D}
D\, \propto \!\!\!\sum_{0 \leq i<j<k\leq n}\!\!\!\!\!
\epsilon_i\epsilon_j\epsilon_k\,
\xi_i\xi_j\xi_k\,.
\end{equation}
The sum on the right side itself does not agree with \Y0 -- \Y2 but is
contained in the polynomial $\big(\sigma_3^{(1)}\big)^3$. Therefore,
$D$ can be excluded by subtracting $\big(\sigma_3^{(1)}\big)^3$ with
appropriate multiplier $\delta_3$. Coefficients $a^{(111)}_{012}$ can
not depend on $n$ due to condition \Y0 for $P^{(3)}_n$, hence
$\delta_3$ does not depend on $n$.

Polynomials $P^{(3)}_n - \delta_3\big(\sigma ^{(1)}_n\big)^3$
include the sums of type $A, B, C$ only and satisfy conditions \Y1
-- \Y2. After applying to $P^{(3)}_n -
\delta_3\big(\sigma^{(1)}_n\big)^3$ any of operators
\begin{equation}\nonumber
2^{\delta_{0k}}\frac{\partial^2}{\partial\xi_i\partial \xi_k} + 
  \frac{\partial^2}{\partial\xi_i\partial \xi_{k+1}}
\quad\mbox{such that}\;
i<k, 
\;\mbox{or}\;
k+1<i ,
\end{equation}
sum $A$ vanishes, and in sums $B,C$ remain the following:
\begin{equation}\nonumber
2(a^{(12)}_{ik}+a^{(12)}_{i,k+1})\xi_k + 2(a^{(21)}_{ik}+a^{(21)}_{i,k+1})\xi_i
\quad\mbox{for}\;i<k;\quad 
2(a^{(12)}_{ki}+a^{(12)}_{k+1,i})\xi_k + 2(a^{(21)}_{ki}+a^{(21)}_{k+1,i})\xi_i
\quad\mbox{for}\;k+1<i,
\end{equation}
where $\xi_{k+1}$ was substituted with $\xi_k$.
According to Lemma~\ref{lem:Y1-diff}, it must hold
\begin{equation}\label{a12_a21-constr}
a^{(12)}_{ik} + a^{(12)}_{i,k+1} = 0\,,\quad
2^{\delta_{0k}}a^{(12)}_{ki} + a^{(12)}_{k+1,i} =0\,,\quad 
a^{(21)}_{ik} + a^{(21)}_{i,k+1} = 0\,,\quad 
2^{\delta_{0k}}a^{(21)}_{ki} + a^{(21)}_{k+1,i} =0\,. 
\end{equation}
From reasoning similar to that leading up to equation (\ref{sum_D}),
it follows that
\begin{equation} \label{sum_B_C}
B\propto\sum_{0 \leq i<j \leq n} \epsilon_i \epsilon_j \xi_i \xi_j^2,
\quad
C\propto\sum_{0 \leq i<j \leq n} \epsilon_i \epsilon_j \xi_i^2 \xi_j\,.
\end{equation}
Sums $B$ and $C$ can be excluded from $P^{(3)}_n -
\delta_3\left(\sigma^{(1)}_n\right)^3$ by subtracting
$\sigma^{(12)}_n$ and $\sigma^{(21)}_n$ with appropriate coefficients
$\beta_3$ and $\gamma_3$. Polynomials $P^{(3)}_n -
\delta_3\left(\sigma^{(1)}_n\right)^3 - \beta_3\sigma^{(12)}_n -
\gamma_3\sigma^{(21)}_n$ contain the sum of type $A$ only, which is
completely determined by coefficient $a^{(3)}_0$ and can be excluded
by subtracting $\sigma^{(3)}_n$ with the appropriate coefficient
$\alpha_3$, giving
\begin{equation}
  P^{(3)}_n - \alpha_3\sigma^{(3)}_n - 
 \beta_3\sigma^{(12)}_n  -  \gamma_3\sigma^{(21)}_n - 
 \delta_3\big(\sigma^{(1)}_n\big)^3 = 0.
\end{equation}
Here $\alpha_3, \beta_3, \gamma_3, \delta_3 $ do not depend on $n$,
because the first coefficients in sums $A,B,C,D$ can not depend on $n$
due to \Y0.  This proves the statement for $P^{(3)}_n$. For
$P^{(1)}_n$, $P^{(2)}_n$ the proof is similar.
\end{proof}

\section{The Rayleigh Region and Beyond}\label{sec:Rayleigh}

Antisymmetric polynomials of up to the second degree give the
following approximation of $M(\xi_0,\ldots,\xi_n)$:
\begin{equation}\label{Rayleigh2_M(xi)}
M(\xi_0,\ldots,\xi_n) = a\sigma^{(1)}_n + \frac{b}{2}\sigma^{(2)}_n.
\end{equation}
This equation describes any hysteresis branch in the neighborhood of
the demagnetized state. According to (\ref{Rayleigh2_M(xi)}), the
equations of any branch of hysteresis curves and of the initial
magnetization curve are
\begin{equation}\label{Rayleigh2_DM}
\Delta M = a\Delta H \pm \frac{b}{2}\big(\Delta H\big)^2,\quad
M = aH \pm H^2,
\end{equation}
where $\Delta M$ denotes change of the magnetization after the return
point; the upper sign corresponds to ascending and the lower one to
descending branches. The same formulation of the Rayleigh law for
hysteresis branches not necessary pertaining to a symmetric cycle can
be found in \cite{Neel1942}. For a symmetric hysteresis loop,
(\ref{Rayleigh2_M(xi)}) gives Rayleigh equations (\ref{Rayleigh}).

The third-degree approximation of
$M(\xi_0,\ldots ,\xi_n)$ with antisymmetric polynomials taken from
Table \ref{tab:poly3-elem} reads
\begin{equation}\label{Rayleigh3_M(xi)}
M(\xi_0,\ldots,\xi_n) = a\,\sigma^{(1)}_n + \frac{b}{2}\sigma^{(2)}_n +
a'\big(\sigma^{(1)}_n\big)^3 + \frac{b'}{4}\sigma^{(3)}_n .
\end{equation}
It has two additional terms with new coefficients $a'$ and $b'$.

The simplest way to obtain equations for branches of a symmetric
hysteresis loop is to substitute $\xi_0 = 2H_m$, $\xi_1 = H_m \pm H$
in $\pm M(\xi_0,\xi_1)$, and for branches of the initial magnetization
curve to substitute $\xi_0 = 2|H|$ in $\mp M(\xi_0)$.
Equation (\ref{Rayleigh3_M(xi)}) gives the following expressions for
branches of symmetric hysteresis cycles and for the initial
magnetization curve:
\begin{equation}\label{Rayleigh3_M}
M = aH \pm \frac{b}{2}\left[(H_m \pm H)^2-2H_m^2\right]
+ a'H^3 \pm\frac{b'}{4}\left[(H_m \pm H)^3 - 4H_m^3\right], \quad
M = aH \pm bH^2 + (a'+b')H^3.
\end{equation}
In these equations, the upper sign corresponds to the ascending and the
lower one to the descending branches, and $-H_m\leq H\leq H_m$.

\medskip

Consider the coefficients $a(H_m)$, $b(H_m)$ determined from a
symmetric hysteresis cycle via the maximum magnetization $M_m$ and the
remnant magnetization $M_r$ as follows:
\begin{equation}\label{Rayleigh_ab}
a(H_m) = \frac{M_m - 2M_r}{H_m}, \quad b(H_m) = \frac{2M_r}{H_m^2}. 
\end{equation}
In the Rayleigh region $a(H_m) = a$, $b(H_m) = b$.  With the
third-degree terms taken into account $a(H_m)$, $b(H_m)$ show
quadratic and linear dependence on $H_m\,$,
\begin{equation}\label{Rayleigh3_ab}
a(H_m) = a + \left(\,a' -\frac{b'}{2}\,\right)\;H_m^2,\ \quad
b(H_m) = b + \frac{3\,b'}{4}\,H_m. 
\end{equation}

\section{Energy transformations}\label{sec:energy}

It is well known that magnetization processes in ferromagnets are
accompanied by irreversible heat generation as well as by reversible
heat exchange. The later is known as the magnetocaloric effect, it can
be comparable by the value with the hysteresis losses
\cite{Bozorth1951}. For simplicity, the following consideration is
restricted to hysteresis systems without the magnetocaloric effect.
In general case, the results presented in this section are not
applicable to real ferromagnets.

\medskip

Let $E$ be the energy of a ferromagnetic specimen per unit volume
without the term $-HM$ responsible for the interaction with the
external magnetic field $H$. For the subsequent consideration, the
only fact that matters is that the energy landscape is rough, and $E$
has numerous local minima divided by energy barriers large in
comparison with $kT$. When the external field changes, the previously
stable energy minimum becomes unstable, and the domain structure of
the specimen makes an irreversible jump to another minimum, lowering
the total energy $E - HM$. If $H(t)$ changes slowly enough, the value
of $H$ can be considered as the same before and after the jump, and
hence
\begin{equation}\nonumber
\delta E - H \delta M \leq 0.
\end{equation}

The energy $E$ as a function of state can be approximated with
symmetric polynomials from Table~\ref{tab:poly3-elem} as follows:
\begin{equation}\label{Rayleigh2_E(xi)-undef}
E(\xi_0,\ldots,\xi_n) = E_0 + \alpha \big(\sigma^{(1)}_n\big)^2 + 
\beta\sigma^{(12)}_n + \gamma\sigma^{(21)}_n.
\end{equation}

For the derivatives of functions $\sigma^{(12)}_n$, $\sigma^{(21)}_n$ with
respect to the last argument $\xi_n$, it holds that
\begin{equation}\nonumber
\frac{\partial \sigma^{(12)}_n}{\partial \xi_n} = 
-\frac{1}{2}\xi_n^2 + 2\epsilon_n\xi_n\sigma^{(1)}_n,
\qquad 
\frac{\partial \sigma^{(21)}_n}{\partial \xi_n} = 
\frac{1}{2}\xi_n^2 + \epsilon_n\sigma^{(2)}_n\,,
\quad\mbox{where}\quad n\geq 1.
\end{equation}

Therefore, on the $n$-th branch for small $\delta\xi_n$ we have
\begin{equation}\label{Rayleigh2_dE(xi)}
\delta E = \frac{\partial E}{\partial \xi_n}\delta\xi_n =  
\left[
2\alpha\sigma^{(1)}_n\epsilon_n + 
\beta\left(-\frac{1}{2}\xi_n^2 + 2\epsilon_n\xi_n\sigma^{(1)}_n\right)
+\gamma\left(\frac{1}{2}\xi_n^2 + \epsilon_n\sigma^{(2)}_n\right)
\right]\delta\xi_n
\quad (n\geq 1).
\end{equation}

In the Rayleigh region, equations (\ref{Rayleigh2_M(xi)}) and
$H = \sigma^{(1)}_n$ give
\begin{equation}\label{Rayleigh2_HdM(xi)}
H\delta M = H \frac{\partial M}{\partial \xi_n}\delta\xi_n = 
\epsilon_n(a + b\xi_n)\sigma^{(1)}_n\delta\xi_n \quad (n\geq 1).
\end{equation}  

By neglecting the magnetocaloric effect, we can write for the heat
dissipation
\begin{equation}\label{E-balance}
\delta Q = H \delta M - \delta E \geq 0.
\end{equation}

For the system that exhibits the return point memory, the states
before and after completing a hysteresis cycle are the same, in
accordance with \Y1. Because of this, $\oint dE = 0$, and $\oint HdM =
\oint dQ$ for any closed hysteresis loop. On the $n$-th hysteresis
branch, by taking into account up to the third-degree terms
\begin{equation}\nonumber
  \Delta Q_n = \alpha'\Delta H_n + \beta'(\Delta H)^2_n + \gamma'(\Delta H)^3_n,
\end{equation}
where $n = 1,2,\ldots$. Coefficients $\alpha'$, $\beta'$, $\gamma'$ do
not depend on $\Delta H_n$, however, $\alpha'$ and $\gamma'$ can
depend on $\Delta H_0,\ldots, \Delta H_{n-1}$. In the approximation
considered, the term $\gamma'$ is independent of $\Delta H_0,\ldots,
\Delta H_{n-1}$. It also can not depend on $n$, because otherwise the
heat generation on branches of symmetric hysteresis cycles will be
different.  The return point can be made anywhere on the branch
$\Delta H_n$ forming, according to (\ref{Rayleigh2_DM}), the loop of
the area $b\,(\Delta H_n)^3/6$.  If $\delta Q \geq 0$, the
inequalities $0\leq \Delta Q_n \leq b\,(\Delta H_n)^3/6$ must hold
true. Because $\alpha'$, $\beta'$ do not depend on $\Delta H_n$, it is
possible only if $\alpha' = 0$, $\beta' = 0$. As the result we have
\begin{equation}\label{DQ_dQ(xi)}
  \Delta Q_n = \frac{b}{12}(\Delta H_n)^3, \quad \delta Q_n = 
  \frac{b}{4}\;\xi_n^2\,\delta\xi_n \quad (n\geq 1).
\end{equation}

Now coefficients $\alpha, \beta, \gamma$ in
(\ref{Rayleigh2_E(xi)-undef}) can be determined by using the energy
balance (\ref{E-balance}).

Substituting (\ref{Rayleigh2_dE(xi)}), (\ref{Rayleigh2_HdM(xi)}),
(\ref{DQ_dQ(xi)}) in (\ref{E-balance}) and comparing the terms gives
$\alpha = a/2$, $\beta = b/2$, $\gamma = 0$. Finally we have
\begin{equation}\label{Rayleigh2_E(xi)}
E(\xi_0,\ldots,\xi_n) = 
E_0 + \frac{a}{2}\,\big(\sigma^{(1)}_n\big)^2 + \frac{b}{2}\,\sigma^{(12)}_n.
\end{equation}
By letting $\xi_0 = 2H_m$, $\xi_1=H_m\pm H$ in $E(\xi_0,\xi_1)$, and
$\xi_0 = 2|H|$ in $E(\xi_0)$ the following equations can be obtained
for branches of symmetric hysteresis cycles and for the initial
magnetization curve:
\begin{equation}\label{Rayleigh2_HE}
E = E_0 +\frac{a}{2}H^2 + \frac{b}{3}H_m^3 - \frac{b}{4}(H_m \pm H)(H_m^2 - H^2),
\quad
E = E_0 + \frac{a}{2}H^2 \pm \frac{b}{3}H^3,
\end{equation}
where $-H_m \leq H\leq H_m$, the signs $\pm$ distinguish the branches
of increasing and decreasing $H$ respectively, and the energy $E_0$ is
the energy of the demagnetized state. As follows from
(\ref{Rayleigh2_HE}), the branch of symmetric hysteresis cycle and the
initial magnetization curve have the second order contact at the
points $\pm H_m$.

\bigskip

The other third-degree symmetric polynomials $\sigma^{(21)}_n$
represent the energy changes for the inverse Rayleigh hysteresis. In
this case we have
\begin{equation}\nonumber
M(\tilde{\xi_0},\ldots , \tilde{\xi_n}) = \sigma^{(1)}_n,
\qquad
H(\tilde{\xi_0},\ldots , \tilde{\xi_n}) = \tilde{a}\sigma^{(1)}_n -
\frac{\tilde{b}}{2}\sigma^{(2)}_n,
\end{equation}
where the variables $\tilde{\xi_0},\ldots , \tilde{\xi_n}$ are defined as
$
  \tilde{\xi_0} = 2\Delta M_0,\, \tilde{\xi_1} = \Delta M_1,\,
  \ldots,\, 
  \tilde{\xi_n} = \Delta M_n,
$ 
similar to (\ref{xi-def}), and 
\begin{equation}
\tilde{a} = \frac{H_m - 2H_c}{M_m}, \quad \tilde{b} = \frac{2H_c}{M_m^2}, 
\end{equation}\nonumber
similar to (\ref{Rayleigh_ab}). Arguments like those leading to
(\ref{Rayleigh2_E(xi)}) give
\begin{equation}\nonumber
E(\tilde{\xi_0},\ldots,\tilde{\xi_n}) = 
E_0 + \frac{\tilde{a}}{2}\,\big(\sigma^{(1)}_n\big)^2 - 
\frac{\tilde{b}}{2}\,\sigma^{(21)}_n.
\end{equation}

\section{Comparison with Experiments on RBIM}
\label{sec:RBIM}

The consideration performed in the previous sections is based on quite
general assumptions and must presumably agree with hysteresis models
that show the return point memory, have smooth hysteresis curves, and
can be demagnetized by gradual reduction of alternating magnetic
field. The most suitable for the experiments seem to be zero
temperature Ising hysteresis models. The random field Ising model
(RFIM) shows precise RPM \cite{Sethna&all1993}. Analytical and
numerical study of RFIM in the Rayleigh region was presented in
\cite{Dante&all2002,Zapperi&all2002, Colaiori&all2002}.  Energy
changes and dissipation in RFIM were considered in
\cite{Ortin&Goicoechea1998}.  In this work, the random bond Ising
model (RBIM), also called the spin glass Ising model
\cite{Vives&Planes1994, Katzgraber&all2003}, was selected for the
comparison.  Like many real ferromagnets, RBIM usually demonstrates
some deviations from the return point memory.

Only a small fraction of spins take part in magnetization processes in
low fields, and, for accurate experiments, the model must have a
relatively large total number of spins. Because of this, obtaining the
demagnetized state can be time consuming, and simple models and
algorithms are preferred.

It is assumed that the Ising spins are placed in a ring and interact
with each other if the distance between them is not greater than $r$.
The Hamiltonian of the model is defined as follows:
\begin{equation}\nonumber
{\cal H} = -\!\!\!\!\!\sum_{\substack{|d(i,j)|\leq r\\i\neq j}}
\!\!\! J_{ij}s_is_j - H\sum_i s_i,
\quad s_i = \pm 1, \; 1\leq i, j \leq N,\;
\end{equation}
where distance $|d(i,j)|$ is determined by the equations $d(i,j)
\equiv (i-j)\,(\mathrm{mod}\,N)$ and $-[N/2] < d(i,j) \leq [N/2]$;
coupling parameters $J_{ij}$ are assigned randomly in the interval
$J_0 -\Delta J < \sqrt{r}\,J_{ij} < J_0 + \Delta J$. The model is free
of edge effects, and any desirable even coordination number $2r$ can
be specified. The magnetization and the internal energy per site are
given by the equations
\begin{equation}\nonumber
E= -\frac{1}{N}\!\sum_{\substack{|d(i,j)|\leq r\\i\neq j}}\; 
\!\!\! J_{ij}s_is_j,
\quad M = \frac{1}{N}\sum_i s_i.
\end{equation}
It was always assumed that $\Delta J = 1$, because changing
proportionally $J_0$ and $\Delta J$ changes the scale along the
$H$-axis only.  Dominating interactions are of the ferromagnetic type
for positive $J_0$ and of the antiferromagnetic type for negative
$J_0$.
\begin{figure}[H]
  \subfloat{\includegraphics[height=0.41\textwidth]{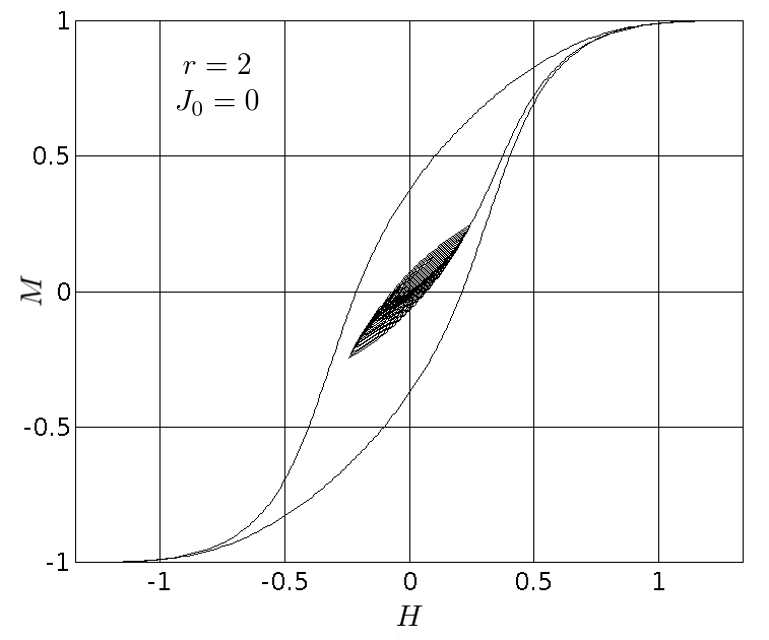}}
  \subfloat{\includegraphics[height=0.41\textwidth]{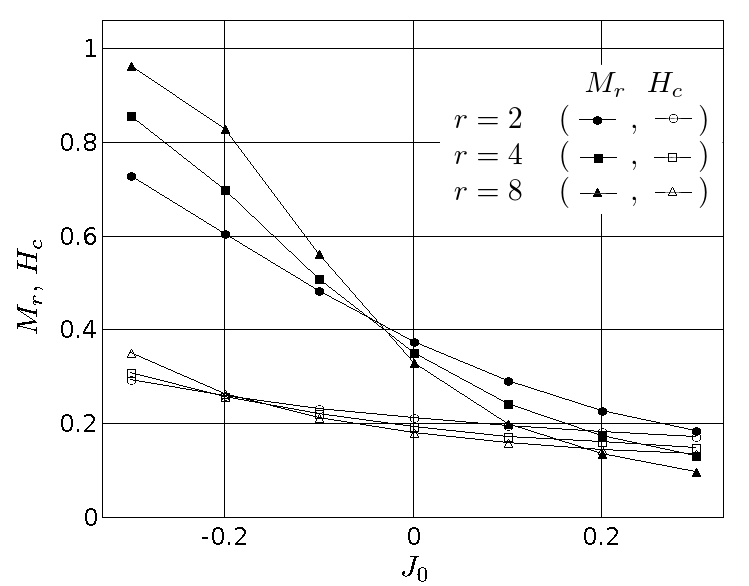}}
  \caption{Main hysteresis loop and a series of symmetric minor loops
    for $r = 2$, $J_0 = 0$ and $\Delta J = 1$ (left); Remanence and
    coercivity for different values of $r$, $J_0$, and $\Delta J = 1$
    (right).}
  \label{fig:RBIM_MH_MrHc}
\end{figure}
\noindent
The deterministic rules describing the dynamics of the model were
used. When the field changes, the stability of spins is checked in the
order of numbering. The first unstable spin flips and the neighboring
sites are updated and checked; again, the first unstable spin flips
and its neighboring sites are updated, and so on, until the spins in
the group become stable. Then the remaining spins are checked and
flipped in the same way, until all the spins are in the stable state.
Another dynamics with random selection between the unstable spins was
tested, with no noticeable difference in the shape of hysteresis
curves.

In the region $-0.3 \leq J_0 \leq 0.3$ the hysteresis curves are
comparable to those of ferromagnets, as shown in
Fig.~\ref{fig:RBIM_MH_MrHc}. The behavior of the model was studied
in this interval of $J_0$. The model demonstrates noticeable but not
very significant deviations from the macroscopic RPM. The deviations
from the microscopic RPM are as follows. For $J_0 = 0, r = 2$, RPM
holds with accuracy $0.2\%$ for $M_m = 0.2$, and with accuracy $4\%$
for $M_m = 0.6$.  Deviation from RPM increases with $r$; for $r=8$,
about $9\%$ of spins change orientation after completing the symmetric
hysteresis cycle with $M_m=0.6$.

\begin{figure}[H]
\centering
  \includegraphics[height=0.36\textwidth]{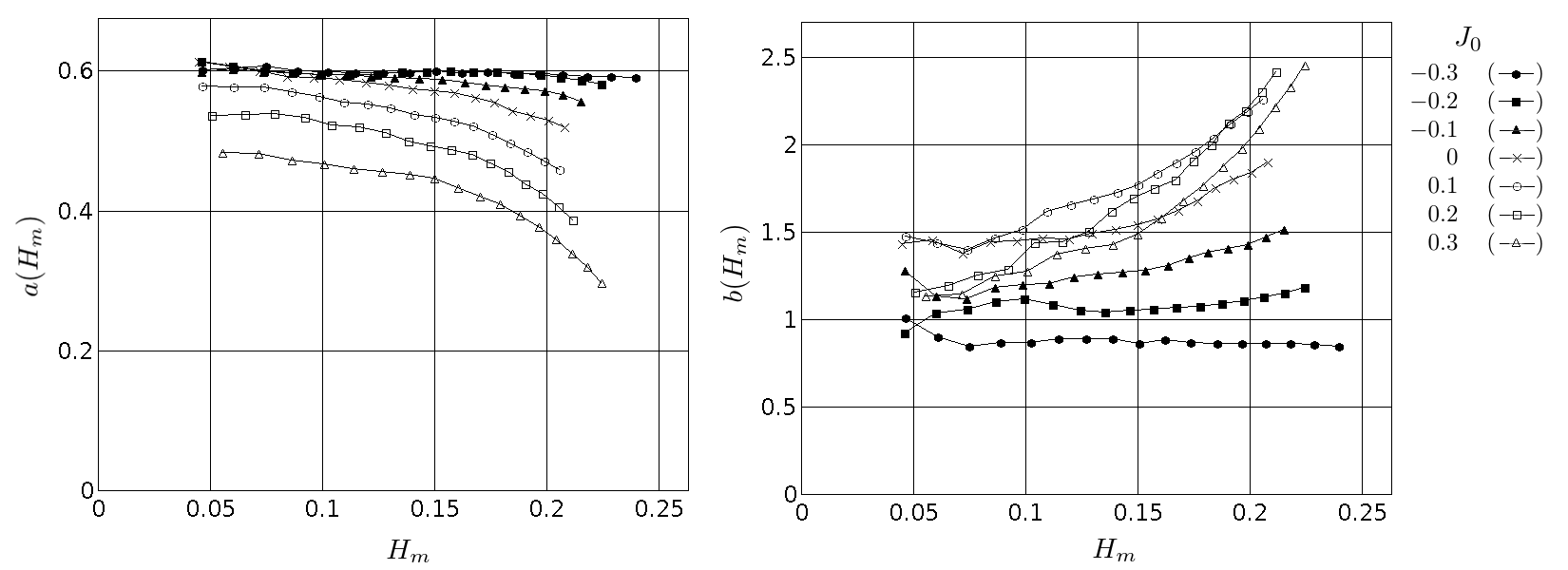}
  \caption{Rayleigh coefficients $a(H_m)$, $b(H_m)$ defined according
    to (\ref{Rayleigh_ab}). Parameters of the model are $\Delta J = 1$, $r
    = 2$, $J_0 = -0.3\,\ldots\,0.3$, $N = 2\cdot 10^5$.}
  \label{fig:RBIM_ab}
\end{figure}

For the experiments in the neighborhood of the demagnetized state were
taken $N = 2\cdot 10^5$, $r = 2$. The demagnetized state was obtained
by applying a series of cycles, each one with the maximum
magnetization equal to the maximum magnetization of the previous cycle
multiplied by a constant coefficient $k < 1$ selected close to
$1$. This procedure provides fine demagnetization near the
demagnetized sate, while for large $H$ the demagnetization is
relatively coarse.

Parameters $a(H_m),\,b(H_m)$ defined according to (\ref{Rayleigh_ab})
are presented in Fig.~\ref{fig:RBIM_ab}. Irregular behavior of the
curves could be explained by insufficient value of $N$, not fine
enough demagnetization, or imperfections of the random number
generator.  The irregular run of the curves in
Fig.~\ref{fig:RBIM_ab} do not allow to make a conclusion on
applicability of equations (\ref{Rayleigh3_ab}).

For the values of $H_m$ where the Rayleigh equations (\ref{Rayleigh})
hold true, $a(H_m)$, $b(H_m)$ must be equal to the Rayleigh constants
$a$, $b$. Not taking into account the irregularity of the curves in
Fig.~\ref{fig:RBIM_ab}, it can be expected that for $J_0 = -0.3$
and for $J_0 = -0.2$ the Rayleigh approximation (\ref{Rayleigh}) is
applicable, with some accuracy, up to $H_m = 0.2$.  It is unclear
whether the Rayleigh region is obtained or not for $J_0 = 0.2$ and
$J_0 = 0.3$.

\subsubsection*{Energy Transformations}

A consideration similar to that performed in Chapter~\ref{sec:energy}
can be applied to Ising spins, assuming that $Q$ denotes the energy
loss instead of the dissipated heat. Therefore, we can expect that
equation (\ref{Rayleigh2_HE}) holds true in the region of fields where
the Rayleigh law is applicable.

The curves with $H_m < 0.8$ were abandoned as not reliable. For $J_0 =
-0.2$, $J_0 = -0.3$, and $0.08 \leq H_m \leq 0.22$, equations
(\ref{Rayleigh}), (\ref{Rayleigh2_HE}) agree with the experiment as
shown in Fig.~\ref{fig:RBIM_HM_HE} as an example. The model gives
similar plots starting from $H_m \approx 0.08$ for all examined values
of $J_0$. While $H_m$ increases, the disagreement becomes noticeable
first with (\ref{Rayleigh2_HE}) and later with (\ref{Rayleigh}).  For
$J_0 = 0.3$, relatively small disagreement between (\ref{Rayleigh2_HE})
and the experiment is observed at $H_m = 0.1$, the disagreement with
(\ref{Rayleigh}) becomes apparent after $H_m = 0.15$.

\begin{figure}[H]
\centering
  \subfloat{\includegraphics[height=0.35\textwidth]{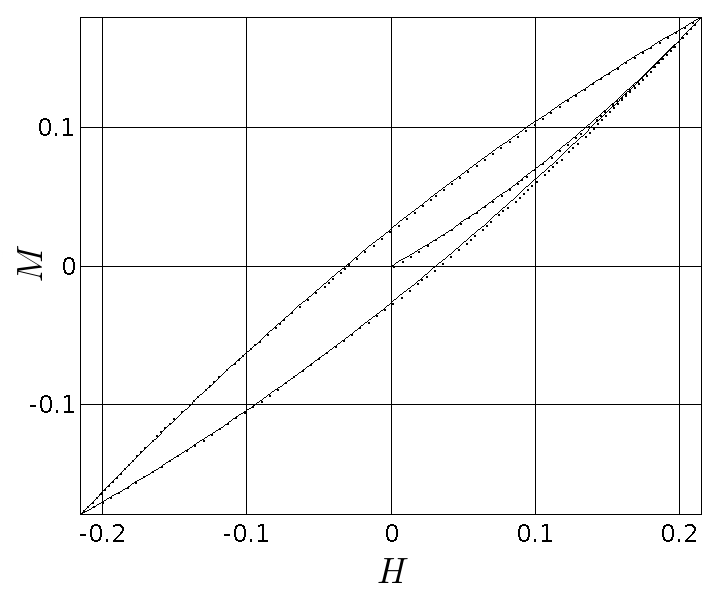}}
  \subfloat{\includegraphics[height=0.35\textwidth]{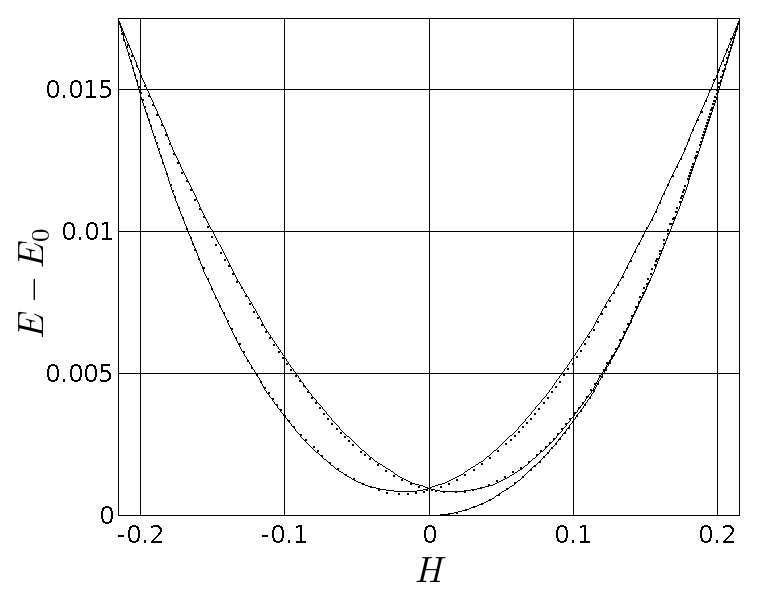}}
  \caption{Rayleigh hysteresis loop in $H, M$ and $H, E$ coordinates;
    $J_0 = -0.2$, $\Delta J = 1$, $r = 2$, and $H_m \approx 0.21$.
    Dotted curves represent the result of the numerical experiment on
    RBIM, solid ones are calculated according to equations
    (\ref{Rayleigh}), (\ref{Rayleigh2_HE}) with the same Rayleigh
    constants, by letting $a = a(H_m)$, $b = b(H_m)$.}
  \label{fig:RBIM_HM_HE}
\end{figure}

\section{Conclusions}

The sequences of polynomials $\{P^{(k)}_n(\xi_0, \ldots,\xi_n)\}$ that
are consistent with the return point memory and the reachability of
the demagnetized state must satisfy conditions \Y0 -- \Y2. These
polynomials can be used for Taylor expansion of the whole set of
hysteresis curves in the neighborhood of the demagnetized state. Eight
sequences of polynomials listed in Table~\ref{tab:poly3-elem} form a basis
in the linear space of the sequences of polynomials up to the third
degree.

There are only two antisymmetric polynomials up to the second degree
in the basis, and a linear combination of them~(\ref{Rayleigh2_M(xi)}) gives
the Rayleigh law.  Antisymmetric polynomials of the third degree add
two terms to the Rayleigh law according to equations
(\ref{Rayleigh3_M(xi)}), (\ref{Rayleigh3_M}).

Equation (\ref{Rayleigh2_E(xi)}) describing dependence of the energy
on the magnetic state $(\xi_0,\ldots ,\xi_n)$ were derived from the
following assumptions: (i) the hysteresis curves comply with the
Rayleigh law according to (\ref{Rayleigh2_DM}), and (ii) the heat is
always dissipated when the magnetic state changes. For symmetric
hysteresis cycles equation (\ref{Rayleigh2_E(xi)}) gives the
dependence of the energy on the applied magnetic field in the
Rayleigh-like form (\ref{Rayleigh2_HE}).

Equations (\ref{Rayleigh2_E(xi)}), (\ref{Rayleigh2_HE}) have no
adjustable parameters but are applicable only to hysteresis systems
without the magnetocaloric effect. In general case, they are not
applicable to real ferromagnets.  However, these equations must
presumably agree with hysteresis models that show the return point
memory, have smooth hysteresis curves, and can be demagnetized by an
alternating magnetic field. Numerical results obtained in the random
bond Ising model show reasonable agreement with
equation~(\ref{Rayleigh2_HE}).

\bibliography{hyst}{}
\bibliographystyle{plain}

\end{document}